\newtheorem{thm}{Theorem}[section]
\newtheorem{prop}[thm]{Proposition}
\newtheorem{defn}[thm]{Definition}
\newtheorem{cor}[thm]{Corollary}
\begin{document}

\title[It's Not Fairness, and It's Not Fair]{It's Not Fairness, and It's Not Fair: The Failure of Distributional Equality and the Promise of Relational Equality in Complete-Information Hiring Games}

\author{Benjamin Fish}
\affiliation{
\institution{University of Michigan}
\city{}
\country{}
}
\email{benfish@umich.edu}

\author{Luke Stark}
\affiliation{
\institution{Western University}
\city{}
\country{}
}
\email{cstark23@uwo.ca}

\begin{abstract}
Existing efforts to formulate computational definitions of fairness have largely focused on distributional notions of equality, where equality is defined by the resources or decisions given to individuals in the system.  Yet existing discrimination and injustice is often the result of unequal social relations, rather than an unequal distribution of resources.  Here, we show how optimizing for existing computational and economic definitions of fairness and equality fail to prevent unequal social relations.  To do this, we provide an example of a self-confirming equilibrium in a simple hiring market that is relationally unequal but satisfies existing distributional notions of fairness.  In doing so, we introduce a notion of blatant relational unfairness for complete-information games, and discuss how this definition helps initiate a new approach to incorporating relational equality into computational systems. 
\end{abstract}

\maketitle

\section{Introduction}
Scholars, designers, and engineers continue to be confronted with a thorny challenge: how to approach the development of technologies like machine learning in practice in ways that account for social norms, ethics, and human values \cite{Eubanks:2018wv,Hutchinson:2019ce,Lee:2018ik,Woodruff:2018if}. Many algorithmic techniques attempting to encode values like fairness have been developed in response \cite{barocas-hardt-narayanan,Verma:2018hw,DworkR14}. Though popular, recent scholarship has described some of the methodological and practical challenges in deploying such tools in the pursuit of ameliorating social inequality and injustice \cite{JafariNaimi:2015en,Eubanks:2018wv,Hoffmann:2019hr,SelbstBFVV19, Benthall:2019dpa}. Moreover, the turn to computational definitions of values like fairness as mechanisms to solve problems of social injustice presents problems in itself \cite{Agre:1997ts,Friedman:1996tm,SelbstBFVV19,Hoffmann:2019hr,Passi:2019ey}. Algorithms and their makers often do not just ignore, but at times exacerbate the material conditions under which inequality flourishes \cite{Eubanks:2018wv,Hoffmann:2019hr}. For instance, algorithms often lack what 
has been elsewhere termed \emph{value legibility} \cite{FishS21}, the degree to which broader contexts or social processes that are known to have normative impacts are left out of a model's design, and/or are unaccounted for in its deployment.  

One core feature of these ongoing debates involves the narrow understanding of fairness, a central value in the field, as a concept. The vast majority of extant fairness models as applied to real-world problems are grounded in a constrained set of philosophical definitions of what fairness is, and what it is for \cite{Binns:2018tf,Heidari:2018us}. These definitions center on what egalitarian philosophers term \emph{distributional equality}, or how material goods ought to be allocated fairly. Yet these theories are poor mechanisms for understanding how to promote equality of social relations between human beings and the elimination of social hierarchies of status and power \cite{Anderson:1999wh}: this latter kind of fairness, known as \emph{relational equality}, is largely missing from conversations around algorithmic fairness as a guiding ethical theory.  

If we wish both to improve the design of computational systems, including in uses for progressive social ends, we need to be able to compare strategies for decision-making along the axis of relational equality. Relational equality cannot be accomplished through formal modeling alone \cite{FishS21}, but developing computational mechanisms for representing this form of fairness is a necessary though not sufficient condition of relational equality's broader project. 

Here, we show how existing computational and economic definitions of fairness (and discrimination) fail to capture relational (in-)equality via a simple game-theoretic model of a hiring market. Not only are hiring markets and algorithmic hiring a ripe venue for concerns of discrimination \cite{RaghavanBKL20,bogen2018help}, but they have long been a focus for understanding discrimination, particularly statistical discrimination \cite{Arrow:1971tv,Phelps:1972tu,fang2011theories}.

We provide \emph{self-confirming equilibria} in this hiring market where a firm will treat a candidate poorly (by only offering them low wages) if the firm believes other firms will also treat the candidate poorly.  We claim that these equilibria are discriminatory and define them as \emph{blatantly} relationally unfair.  Doing so allows us to overcome the difficulties of translating relational equality into a formal definition:  while relational equality may be a well-formed ethical theory, it is not particularly amenable to computational formalization.  We then show how satisfying existing computational and economic definitions of fairness fails to prevent these equilibria, and thus fails to prevent relational inequality, without having to formally define relational fairness---merely a form of blatant relational unfairness.\footnote{This approach also has the benefit that we need not believe that relational equality should be an ethical end to conclude that existing definitions of fairness are insufficient:  all we need to believe is that this particular situation in this hiring market is in some way unfair or unjust.  This contrasts with much previous work where the definition of computational fairness used is never justified but simply declared as axiomatic.}  

In particular, we demonstrate this discrimination in our model hiring market is neither taste-based nor statistical: the firm need not hold any personal animus against the candidate, and the candidate may not be any less qualified or skilled than other candidates treated more advantageously, even at equilibrium.  We also show this type of equilibrium is not prevented by enforcing extant computational definitions of fairness: neither by enforcing forms of group fairness, like equalized odds \cite{Hardt:2016wv} or statistical parity \cite{kamiran2009classifying}, nor individual notions of fairness, like Dworkian individual fairness \cite{Dwork:2011vl}, nor even causal notions of fairness \cite{KusnerLRS17}.

While the equilibria that result in this kind of discrimination are relatively easy to find, to the best of our knowledge the implications of such self-confirming equilibria for the formal treatment of inequality have not been made clear prior to this work, nor recognized as a distinct form of discrimination.  These implications are significant.  That existing computational definitions of fairness fail to prevent the relational inequality we demonstrate implies a need for novel treatments of discrimination and unfairness which center relational equality, rather than focusing only on distributional equality.  We initiate this work by providing a formal definition of \emph{blatant unfairness} in complete-information games.  This definition helps us formalize the most egregious kinds of relational inequality, while still capturing the close connection between distributional inequality and relational inequality.  Given the ubiquity of unequal social hierarchies not only in situations of material redistribution (such as hiring markets) but also more broadly across all kinds of social relations, our formal definition of blatant unfairness seeks to start making legible a wider array of forms of unfairness than existing computational solutions.  Complete-information games are simple enough to largely ignore the most thorny issues of relationships -- players already agree on their payoffs and the rules of the game -- and thus represent a convenient starting point for the study of computational relational equality.

We first provide background on theories of relational equality, extant computational definitions of fairness, and statistical discrimination and game theory (Section \ref{sec:background}). We then introduce our model of a hiring market and our definition of blatant unfairness, grounding this definition in theories of relational equality, and show how existing definitions fail to prevent blatant unfairness (Section \ref{sec:blatant_unfairness}) and then extend our work to games with more than two players (Section \ref{sec:multi-player}).  We conclude with a short discussion examining the limitations of and future directions for this work (Section \ref{sec:discussion}).

\section{Defining Algorithmic Fairness}\label{sec:background}

Appeals to consider the broader societal implications of algorithmic systems are often implicitly calls to focus on issues of relational equality, and existing debates around algorithmic fairness all touch on aspects of relational equality. However, none captures the full bundle of related concerns relational equality makes conceptually tractable.  Perhaps the most closely related work is that of Kasy and Abebe \cite{KasyA21}, who recognize limitations in existing definitions of fairness closely related to the broader concerns relational equality brings to the fore.  Other related works include that of Birhane \cite{Birhane21}, who focuses on issues of Afro-feminist relational ethics in algorithmic systems, and Viljoen \cite{viljoen2020democratic}, who proposes a relational account of data and its governance.  

In contrast to Kasy and Abebe \cite{KasyA21}, we argue that the necessity of widening the scope of what fairness means, especially in order to include theories of relational equality, requires entirely rethinking existing formal models of welfare and fairness altogether. We build on wider debates around the differences between what have been variously described as distributional versus dignitary \cite{Hoffmann:2019hr} or allocative versus representational harms \cite{AbbasiFSV19}. There is now a robust tradition in algorithmic fairness arguing broadly for attention to social inequalities through critiques of algorithmic distribution in a wide set of areas \cite{Barabas:2018vc, Eubanks:2018wv, CostanzaChock:2018fc, Anonymous:4jyXmvug, Dencik:2019jj,Park:2019fk}.  Focusing on relational equality as a concept draws together many of these extant criticisms of algorithmic fairness under one banner: that technologists must attend to the realities of society's inequalities and injustices whatever else they do. 

Theories of relational equality emphasize the centrality of fair social relations and the elimination of status and power hierarchies as the ``point'' of egalitarian thought. Relational equality provides an overarching frame for the wide variety of critiques leveled at algorithmic fairness work from critics both internal and external to the community \cite{Walzer:1983wu, Schemmel:2011ic, Scheffler:2013vx}. Our focus here is on the variant of relational equality proposed by Elizabeth Anderson \cite{Anderson:1999wh} known as democratic equality. In her initial 1999 article describing democratic equality and in subsequent work \cite{Anderson:2007uo,Anderson:2008gw,Anderson:2013hu,Anderson:2015wq}, Anderson has pointed out that the proponents of distributional equality in political theory \cite{Cohen:1989td,Dworkin:1981vu} are often more inadvertently damning of the concept of equality than supportive of it. For Anderson, many extant theories of equality combine the worst abuses of free market and centralized planning systems: such theories both neglect the social equality of large portions of the population, such as stay-at-home caregivers, but also entail invasive, paternalistic oversight by the state in order to determine moral desert. 

Instead, ``the proper negative aim of egalitarian justice is not to eliminate the impact of brute luck from human affairs, but to end oppression, which by definition is socially imposed''; likewise, egalitarian justice's ``proper positive aim is not to ensure that everyone gets what they morally deserve, but to create a community in which people stand in relations of equality to others'' \cite{Anderson:1999wh} (288).  Equality is a question of social relations first and foremost, grounded in the ``equal moral worth of persons'' (312).  Differences in the diversity of human experience do not justify unequal social relations.

Anderson's democratic equality is notable for being resolutely opposed to paternalistic coercion and moral judgments by institutions, a practice simultaneously under fire in debates around digital ``nudging'' \cite{gandy2019toward}. In the context of algorithmic fairness, democratic equality thus also be understood as a critique of what Zuboff \cite{Zuboff:2019uf} terms ``surveillance capitalism'' facilitated by widespread digital surveillance, data collection, and ML-based analytics. These systems only enhance the capabilities of what Anderson elsewhere describes as ``private government" \cite{Anderson:2015wq} -- the ability of institutions such as private enterprises to command and control their workers while enabling limited recourse. Given that public authorities sometimes also practice such coercion as well, Anderson observes that democratic equality, with its emphasis on individual dignity, lack of social standing, and democratically enabled decisions, is a mechanism to ensure the state remains fair, accountable, and benefits all equally \cite{Anderson:2008gw}. Nonetheless, relational equality takes as axiomatic that, to paraphrase the poet John Donne, "no one is an island": equal social relations are in some way dependent on material equality and vice versa.  By bringing together critiques of existing conceptions of equality focused solely on material equality with a shift in focus to social relations, Anderson's conception of relational equality provides a potential path forward to overcome the call in algorithmic settings to attend to the realities of society's injustices and unequal social relations.

\subsection{Group Fairness}

Statistical notions of fairness based around distributional equality are central to current debates around discrimination and inequality in machine learning.  Many of the most popular such notions are those that fall under the penumbra of what we refer to as \emph{group fairness}.  In contrast to relational equality, these computational models of fairness center around a particular understanding of how to distribute material resources between groups known as equality of opportunity (EOP). As Heidari et al. \cite{Heidari:2018us} have noted, notions of group fairness parallel extant variations of EOP from political philosophy. The authors show how specific fairness definitions employing predictive value parity \cite{KleinbergMR17} and equality of odds \cite{Hardt:2016wv} map to particular variants of EOP: the former to Rawlsian ``fair'' EOP \cite{Rawls:2009ux}, the latter to so called ``luck'' egalitarianism \cite{Roemer:2002gx}. Broadly speaking, luck egalitarianism is premised on the moral necessity of only equalizing inequalities based on luck, or circumstances outside an individuals control. As Binns \cite{Binns:2018tf} notes, notions such as relative welfare (understood in terms of pleasure or preference-satisfaction) \cite{Cohen:1989td,Dworkin:1981vu}; material resources \cite{Rawls:2009ux,Dworkin:1981ux}, or capabilities (the ability and resources necessary to act) \cite{Sen:1992ue} can all be equalized via EOP approaches: all these models of equality center on determining the proper distribution of the good or resource in question.   

In machine learning, group fairness metrics are typically defined for supervised learning settings using a sensitive feature of the data to be analyzed.  Here, we consider a feature space $X$ where each individual has feature values $x\in X$, and a \emph{sensitive feature} $a\in A$, which represents a demographic feature that we are concerned people should not be held responsible for, such as a race or gender.  Each individual also has a target label $y\in Y$; the goal is to construct a classifier $g:X\rightarrow D$ for some decision space $D\subset\mathbb{R}$.  When otherwise unclear, we use $A(x)$ to denote the sensitive attribute of individual $x$ and $Y(x)$ the label of $x$.  We now introduce a notion of exact group fairness that generalizes many of the well-established types of group fairness metrics.  As in typical supervised machine learning, group fairness metrics assume a distribution over individuals; for the definition of group fairness we will abuse notation and also use $X$ to denote a random variable over individuals, and similarly with $Y$ and $A$.
\begin{defn}
A classifier $g:X\rightarrow D$ satisfies exact $(F_1,F_2)$-\emph{group fairness} with respect to a sensitive attribute $A$ for measurable functions $F_1$ and $F_2$ if
\[F_1(g(X),Y) \perp A \mid F_2(g(X),Y).\]
\end{defn}

Group fairness parameterizes a wide variety of popular definitions based on equalizing statistics from a confusion matrix.    For example, if $F_1(g(x),Y(x)) = g(x)$ and $F_2$ is any constant, then $(F_1,F_2)$-group fairness is statistical parity:

\begin{defn}[Kamiran and Calders \cite{kamiran2009classifying}, Dwork et al. \cite{Dwork:2011vl}]
A classifier $g:X\rightarrow\mathbb{R}$ satisfies \emph{statistical parity} if, given a distribution over $X$ and a sensitive attribute $A$,
\[P[g(x) \ge z | A =a] = P[g(x) \ge z]\]
for all $a\in A$ and $z\in\mathbb{R}$.
\end{defn}

Group fairness also includes other popular definitions, like \emph{equalized odds} \cite{Hardt:2016wv}, where $F_1(g(x),Y(x)) = g(x)$ and $F_2(g(x),Y(x)) = Y(x)$.

While group fairness definitions have become increasingly popular in the computational literature \cite{barocas-hardt-narayanan,du2020fairness,Verma:2018hw}, scholars have posited several criticisms of these definitions, ranging from an overly narrow conceptualization of disadvantage \cite{Hoffmann:2019hr} to how they fail to treat similarly ``risky'' people similarly \cite{CorbettDavies:2018ud}.  In response to these perceived failures, several alternative definitions have been proposed.

\subsection{Individual and Causal Fairness}

Group fairness also fails to prevent discrimination against individuals, provoking Dwork et al. \cite{Dwork:2011vl} to introduce \emph{individual fairness}.

\begin{defn}[Dwork et al. \cite{Dwork:2011vl}]
A classifier $g:X\rightarrow D$ satisfies $(M,m)$-\emph{individual fairness} if for every $x,y\in X$, $M(g(x),g(y)) \le m(x,y)$, where $M$ is a statistical distance (we assume $M(g(x),g(x))=0$) and $m$ is a metric.
\end{defn}

Group fairness attempts to ensure equality across people under the intuition that people with one value of a sensitive feature are as deserving as people with another value of a sensitive feature.  But changing the sensitive feature may change other features of an individual, which requires examining counterfactuals. This reality has prompted increased attention to \emph{causal fairness}.

We focus on the strongest notion of causal fairness in the literature; if satisfying this notion of causal fairness still appears relationally unfair, then so will satisfying weaker variants.  This notion is also an individual notion of fairness, like Dworkian individual fairness, but instead of comparing individuals to each other, causal fairness compares an individual to the counterfactual where they had a different value of the sensitive attribute.

\begin{defn}[Kusner et al. \cite{KusnerLRS17}]
A classifier $g:X\rightarrow D$ is \emph{individually counterfactually fair} with respect to a causal graph if
\[P[g(x,a) = d|X=x,A=a] = P[g(x,a') = d|X=x,A=a]\]
for all $x\in X$, $a,a'\in A$, and $d\in D$.
\end{defn}

For an introduction to causal models and graphs, see Pearl \cite{pearl2009causal}.

\subsection{Criticisms of computational definitions of fairness}
Prior to this work, there have been a variety of other critiques of existing computational fairness definitions.  We briefly review these critiques, as well as how our critique put forth in this work differs from these.

Group fairness in particular has seen much criticism ranging from an overly narrow conceptualization of disadvantage \cite{Hoffmann:2019hr} to how they fail to treat similarly ``risky'' people similarly \cite{CorbettDavies:2018ud}.  There has also been criticism specific to particular definitions of group fairness.  For example, Dwork et al. \cite{Dwork:2011vl} and Hardt and Srebro \cite{Hardt:2016wv} criticize statistical parity for reducing utility to the entity performing the classification, as satisfying statistical parity may not permit a classifier that agrees with the target labels $Y$.  To deal with this issue, Hardt and Srebro introduce equalized odds (where $F_1(g(x),Y(x)) = g(x)$ and $F_2(g(x),Y(x)) = Y(x)$).  However, equalized odds, as a notion of group fairness, still does not escape the criticisms that are applicable to all group fairness definitions.

Our critique of group fairness is an extension of those focused on its normative and philosophical underpinnings (e.g. Hoffman \cite{Hoffmann:2019hr}).  More directly, we can apply here the same criticism that Anderson levels against proponents of distributional equality in political theory, since the existing definitions of fairness largely implement these exact theories of distributional equality \cite{Heidari:2018us}. In asking for an equality of resources, existing formal fairness definitions neglect the social equality of workers for which firms and employers believe have poor outside options, such as ``no-wage'' industries like stay-at-home caregiving \cite{Doucet_2015} or low-wage industries like fast food or package delivery \cite{Savas_2010}.  Yet existing definitions also involve a paternalistic oversight by decision makers.  They get to decide who deserves equal treatment and who does not, via control over the inputs and outputs of algorithmic fairness:  e.g. what data is used to create labels, what counts as a sensitive feature, and even who gets to skip the algorithmic decision making system entirely \cite{Scheuerman_Denton_Hanna_2021,Denton_Hanna_Amironesei_Smart_Nicole_2021}.

We demonstrate how an ethical theory grounded in relational equality renders group fairness undesirable to use:  if everybody is treated equally but equally badly, then this outcome will appear unfair but still satisfies group fairness. In contrast to previous technical work where such criticisms, at least in principal, might be avoided by minor changes to the definition, Anderson's critiques applies much more broadly -- and not only for group fairness definitions, but also Dworkian individual fairness and causal fairness.  

Previous criticism of individual fairness has largely focused on the practical difficulties in constructing the metric $m$, which encodes who should be treated similarly to each other and who should not \cite{GillenJKR18,JungKNRSS19,Ilvento20}.  This criticism closely mirrors the debates within theories of EOP as to when people should be held responsible for differences between themselves. While we agree that group fairness fundamentally fails to protect individuals, it may be the case that constructing the metric is a solvable problem, subject to a particular interpretation of EOP. However, our criticism focuses on the fact that individual fairness is still a notion of distributional equality:  just as in group fairness, if everybody is treated equally but equally badly, then this outcome seems unfair but still satisfies Dworkian individual fairness.

Meanwhile, critiques of causal notions of fairness have focused on two basic conceptual issues around using causal models.  The first kind of critique centers on the difficulties in practice of manipulating sensitive features:  an individual cannot practically change the racial category they are assigned, especially at classification time \cite{kohler2018eddie}.  The second kind of critique focuses on the difficulties of using sensitive features with complex social meanings as stable entities that can have causal effects.  The concern is that what may seem like an effect of the sensitive feature actually constitutes its social meaning \cite{hu2020s,kohler2018eddie}.  Moreover, the only way we can experimentally derive causal effects of sensitive features is negatively, by fixing a selected set of other features and showing that they had no effect on the target of interest by fixing them, and that it must therefore have been because of the change in value of the sensitive feature (as in r\'esum\'e audit studies \cite{bertrand2004emily}).  Similarly to individual fairness, where the metric provides the substantive guidance on what counts as fairness, here the features we chose to fix provide what should count as fair and what should not \cite{hu2020direct}.

These criticisms rely on the particulars of the sensitive feature mobilized as a social category, and in the difficulties of constructing a causal model that represents such categories.  Here, however, we avoid the difficulties of formalizing this kind of criticism by focusing instead on the simple fact that causal fairness is still distributional, in the sense that it seeks to equalize the distribution of resources (though of course not ``distributional'' in the sense of only using the joint statistics of the features and classifier).

Finally, a line of previous work implicitly criticizes some of these definitions of fairness by showing that placing a fairness constraint on the classifier can still lead to unfairness in the overall system via feedback loops, and that therefore ensuring fairness requires examining the entire system \cite{ElzaynJJKNRS19,LiuDRSH19,MouzannarOS19,DAmourSABSH20}.  But this work demonstrates only that it is ineffective to place a constraint only on the classifier in order to ensure a given fairness definition holds over time.  This work does not demonstrate that the definition of fairness should not be solely a function of the classifier to begin with.  In contrast, we provide a concrete model where unfairness cannot be defined solely as a function of a classifier and should be defined as a property of the entire system instead.

\subsection{Statistical discrimination, markets, and equilibria}

In this work, we are primarily concerned with classifiers being played as strategies in games, and in particular, games featuring surpluses generated between a firm and a candidate for a job.  We now introduce some necessary basics from game theory, along with the relevant existing notions of fairness from the economics literature.  In games, and in games representing hiring markets in particular, the economics literature focuses on discrimination arising not just from prediction error, as in group and individual fairness, but from behavioral biases.  The two most common forms of this are taste-based discrimination and statistical discrimination.  Under taste-based discrimination, discriminatory outcomes are produced by the direct animus of an individual or group.  Setting aside the difficulties in making causal models, animus does not need mediation by other factors to link a sensitive feature to an outcome.  So even if it is hard to formally define ``direct animus'', we can avoid this issue by saying that at the very least, if we are given a causal graph and there's no edge from $A$ to $D$, there can be no taste-based discrimination:

\begin{defn}
A classifier $g:X\rightarrow D$ exhibits no \emph{taste-based discrimination} with respect to a sensitive feature $A$ in a causal graph if there is no directed edge from $A$ to $D$.
\end{defn}

Group-level disparity can have other sources, however.  \emph{Statistical discrimination} is when disparate group outcomes result from the actions of utility-maximizing agents that use observable characteristics of the groups to infer outcome-relevant characteristics of individuals, first described by Arrow \cite{Arrow:1971tv} and Phelps \cite{Phelps:1972tu}.  Disparate outcomes under statistical discrimination, are often, though not always, the result of informational frictions, like an unobserved skill level of workers \cite{fang2011theories}.  If a firm observes group membership, and group membership is correlated to skill level, then the firm will use group membership as a proxy to hire workers, thereby disincentivizing workers of one group to invest in skills at equilibrium which in turn creates the correlation between group membership and skill level.

Crucially, in models of statistical discrimination, the decision maker claims to be ``justified'' in their use of group membership, defined by some sensitive feature $A$, in the sense that there is a real difference, at equilibrium, between the groups:  one group will on average perform worse than the other, and thus generate less \emph{surplus}, i.e. the total benefits reaped from the decision maker hiring the workers. Statistical discrimination externalizes the reasons for differences between groups, such as historical and/or ongoing discrimination and animus. 

Here we use the amount of surplus generated as the skill level.  If the firm believes one group will on average generate less surplus than the other, then given a noisy estimate of the surplus of a worker from that group, the firm will use a higher threshold on the noisy estimate to hire them than for the other group.  Different thresholds mean the worker with the \emph{least} surplus hired will be different across group membership.  Using this difference of marginal productivity as a definition of bias is the so-called Becker test \cite{cowgill2019economics}.  Regardless of whether the firm fails the Becker test or not, though, statistical discrimination is when there is a true average difference of surplus across $A$, and the firm uses $A$ to take advantage of this, creating different outcomes across $A$.  At the very least, then: 
\begin{defn}
A classifier played by a decision maker $f$ exhibits no \emph{statistical discrimination} with respect to a sensitive feature $A$ if the average surplus generated between the decision maker and a player is the same across all groups defined by $A$ at equilibrium.
\end{defn}

That the decision maker needs to generalize using the sensitive feature to maximize utility may make it difficult to argue that statistical discrimination is always undesirable or unfair -- if statistical discrimination is always wrong, then it would preclude all kinds of generalization, even against innocuous features (say, being a smoker, for public health reasons) \cite{Binns:2018tf}.  However, personalization thus enabled may also entail implicit judgements regarding moral desert, creating the kinds of paternalistic intervention Anderson's relational equality wishes to avoid.  Regardless, this kind of criticism of statistical discrimination does not preclude a definition of unfairness based on the Becker test, which could include some subset of the combination of taste-based and statistical discrimination.  In contrast, we will claim in this work that there are blatantly unfair situations that passes the Becker test where the discrimination at issue is neither taste-based nor statistical.
Besides taste-based and statistical discrimination, there have been a few other approaches, including network-based discrimination \cite{arrow1998has}, though these also largely feature true productivity differences between groups or individuals (though, see \cite{Fish:2019gi}).

In order to find situations that are unfair, yet display no statistical or taste-based discrimination and would pass the Becker test requires situations in which a firm believes in some kind of difference across groups, even when there isn't any.  This requires introducing a solution concept where beliefs are not always correct, which means we can't use Nash equilibria. Here we consider extensive-form games with perfect recall and complete information, where the rules of the game and the payoffs are common knowledge, and players remember the actions they play.  Let's start by rephrasing the definition of a Nash equilibrium to see how it is the result of players maximizing utility according to their beliefs, such that their beliefs are always correct.  To do this, we introduce some notation and recall some necessary basics from game theory.

We will denote $s$ as a pure strategy profile, so $s_i$ is a pure strategy for player $i$, which maps each node in the extensive-form game to an action, and $\pi$ as a mixed strategy profile, meaning $\pi_i$ is a distribution over the possible pure strategies for player $i$.
Recall that an information set $h\in H$ in an extensive-form game is a partition of the non-terminal nodes.  Call $H_i$ the information sets where player $i$ has the move (so $H_{-i}$ is the remaining information sets) and $H(\pi)$ to be the set of information sets that can be reached under mixed strategy profile $\pi$ with positive probability.  Since we only consider games with perfect recall, every mixed strategy has an equivalent behavior strategy with an identical distribution over outcomes.  Recall a behavior strategy for player $i$ maps each information set where $i$ plays to a distribution over actions.   We will use $\sigma$ to refer to a behavior strategy profile, so that $\sigma_i$ is the behavior strategy for $i$, $\sigma_{-i}$ is the remaining behavior strategies, and $\sigma_i(h_i)$ is a distribution over actions at information set $h_i\in H_i$.  Since any mixed strategy has an equivalent behavior strategy, given a mixed strategy $\pi_i$, we can denote $\hat{\pi}_i$ as the equivalent behavior strategy induced by $\pi_i$.  A player's belief should represent $i$'s conception of the other player's strategies,  so we define a belief $\mu_i$ to be a distribution over profiles of behavior strategies for each of $i$'s opponents.  These behavior strategies need not a priori be strategies actually played by anyone.  Finally, let $u_i$ be the utility function for player $i$ and let $u_i(s_i,\mu_i)$ be the expected utility for player $i$ if $i$ plays pure strategy $s_i$ and the remaining players play a random strategy drawn from belief $\mu_i$.  As a warmup, with this notation, a Nash equilibrium is a strategy where each player maximizes their utility according to their beliefs, such that their beliefs about what their opponents will play are correct with probability $1$.
 
\begin{defn}[see Fudenberg and Levine \cite{fudenberg1993self}]
A strategy profile $\pi$ is a \emph{Nash equilibrium} if for each player $i$ and and each strategy $s_i$ with $\pi_i(s_i) > 0$, there are beliefs $\mu_i$ such that
\begin{itemize}
\item $s_i$ maximizes $u_i(\cdot,\mu_i)$ and
\item $P_{\sigma_{-i}\sim \mu_i}[\sigma_j(h_j) = \hat{\pi}_j(h_j)] =1$ for all $j\neq i$ and $h_j\in H_{-i}$.
\end{itemize}
\end{defn}
The second constraint, which must hold over all $h_j\in H_{-i}$, ensures that $\mu_i$ is always correct.  In this work, we employ self-confirming equilibria, where each player $i$'s beliefs need not be correct at all of their opponent's information sets, merely the information sets ${H}(s_i,\pi_{-i})$ that actually happen at equilibrium; players may be arbitrarily wrong about strategies on off-equilibrium paths.

\begin{defn}[Fudenberg and Levine \cite{fudenberg1993self}]
A strategy profile $\pi$ is a \emph{self-confirming equilibrium} (SCE) if for each player $i$ and and each strategy $s_i$ with $\pi_i(s_i) > 0$, there are beliefs $\mu_i$ such that
\begin{itemize}
\item $s_i$ maximizes $u_i(\cdot,\mu_i)$ and
\item $P_{\sigma_{-i}\sim \mu_i}[\sigma_j(h_j) = \hat{\pi}_j(h_j)] =1$ for all $j\neq i$ and $h_j\in {H}(s_i,\pi_{-i})$.
\end{itemize}
\end{defn}
This definition allows the beliefs to vary depending on the strategy $s_i$, which makes sense when a single ``player'' consists of a distribution over different agents with different beliefs but the same utility function, each playing a pure strategy.  A \emph{unitary} self-confirming equilibrium is when the belief $\mu_i$ can only depend on the player and not $s_i$.  In this work, all self-confirming equilibria will be unitary, but because this difference is not important for us, we will not discuss it further.

\section{The job market and blatant unfairness in two-player games}\label{sec:blatant_unfairness}

\subsection{An example}\label{sec:example}

Interrogating social inequality in the workplace is central to Anderson's notion of relational equality \cite{Anderson:2015wq}. As such, we begin by introducing a central example focused on hiring, which will help us define blatant unfairness formally, and show that our notion is incompatible, a la Chouldechova et al.\ \cite{Chouldechova17} and Kleinberg et al.\ \cite{KleinbergMR17}, with existing computational definitions of fairness.  This example also serves as a thought experiment, in the philosophical sense, to build our moral intuition as a way to motivate an alternative ethical, philosophical, and mathematical approach to those approaches used by extant mathematical definitions of fairness.

In this example, an individual is applying for a job at a firm, in a market sufficiently large and important that there is no realistic alternative for the individual to seek other markets or other alternatives.  Consider the scenario -- \emph{scenario A} -- where the firm and the job candidate believe the job candidate has few options to work elsewhere in this market, or only at lower wages. The firm is thus highly incentivized neither to make the job easy to get nor offer the candidate high wages.  Yet even when it is challenging for the job candidate to get the job, they will try anyway; and if they are offered low wages, they will accept it.  Moreover, the firm will indeed offer low wages, as they acquire a worker without having to pay them a premium.  

Now consider an alternative scenario -- \emph{scenario B} -- also featuring a candidate for the same job.  (The nature of this alternative does not matter much right now as a matter of normative relevance:  it may be a counterfactual, or simply occur at a different time or place, or even consist of a different firm under the same incentives.)  In this scenario, the candidate is exactly as qualified for the position as the candidate in the previous scenario, but now both the firm and this candidate believe that this candidate has many options to work elsewhere.  Now the firm is incentivized to offer the candidate a job quickly, and with higher wages.

Neither of these scenarios are meant to be exact depictions of realistic hiring markets.  It would be extremely difficult to find a natural experiment in which the only difference is in the players' beliefs, for example.  Rather, these examples are meant as a thought experiment to demonstrate a particular example of an unequal social relationship, which is an important feature of Andersonian democratic inequality.

The social relationship at question is the one between the firm and the job candidate, which is enacted via the actions the firm and job candidate take while bargaining over wages: the firm, what wage to offer, if any, and the candidate, what wages they're willing to accept.  These actions affect both the firm and the candidate and are shaped by their beliefs about each other: in other words, it is a social relationship.  But it is not an equal one.  A candidate stuck in scenario A will always have a difficult time getting a job in this market: this candidate is treated as ``lesser than'' the candidate benefitting from scenario B, even if the firm knows that they would both still benefit by moving to scenario B.  The candidate from scenario A is never treated as worthwhile by the firm, even when ``worth'' is defined narrowly as the worth of the candidate to the firm.

Nor does it appear that it is the candidate's fault in scenario A, because we assume the candidate in scenario A did not differ from the candidate in scenario B in any meaningful way.  It is the case that if the candidate's belief is not true that they have few options, they should instead simply change their beliefs.  But because the candidate from scenario A doesn't believe that they will get a job elsewhere, the candidate always accepts the bad offer from the firm, and so they never get a chance to explore their outside options; thus the candidate never updates their priors.  

Worse, the scenario above can involve a self-fulfilling prophecy of another kind.  Even if the unfortunate candidate is wrong in their priors initially, they will soon be proven right in their beliefs if they always face scenario A. If each and every firm believes this unfortunate candidate has no outside option, then the entire market will treat the candidate in one way, and the belief of the market will become reality: the unfortunate truly will not have an outside option. If everyone believes it, it becomes true. And if this hiring market is sufficiently large, or firms' beliefs are uniform across different hiring markets (motivating an inelastic demand for jobs), the ability to move to a different hiring market may not exist at all, or be impractical even if it does in theory.  This scenario makes it hard to argue that the candidate from scenario A doesn't deserve or need a job in this particular market, and should find employment in some greener pasture where firms have different beliefs about social dessert.

Thus, we can conclude that scenario A is an example of a lack of social standing, and that under Anderson's democratic equality, this is a blatantly relationally unfair outcome.  Admittedly, there is no ``hard evidence'' that this relationship is unfair, nor can there be.  Instead, this sense of unfairness can come to us as either a particular interpretation of Andersonian relational equality or simply a moral intuition that it is intolerable when individuals are treated as ``lesser than''.  In more realistic settings, this social relationship extends well beyond the single point of bargaining which we focus on here, extending both before to their relationship in the market and after to their relationship as employer and employee.  Likewise, Andersonian relational equality also extends well beyond this point, by considering how democratically enabled decisions, individual dignity, etc., enable relational equality on a much wider scale.

However, this simplified relationship consisting of bargaining over wages is still sufficient to find an example, summarized in scenario A, of unequal social standing, even if it is not sufficient to characterize the full scope of relational equality.  This will allow us to create a concrete model of blatant relational inequality by modeling these scenarios as simple games.

\subsection{The market}
We consider a non-repeated Nash bargaining game with two players, the firm and the candidate, splitting a unit surplus, plus a third player, the market, who provides the disagreement point, i.e.\ an outside option.  The surplus represents the total benefits reaped from the candidate working the job the firm is offering.  We represent each candidate as a vector $x\in X$, and denote the firm by $f$.  We assume that the feature vector uniquely identifies the candidate, so the game remains a complete-information one.

Both players have need:  the firm needs to fill the job and the candidate needs a job, or else suffers a negative utility.  For simplicity, we set this negative utility for each of them as $-1$, and since they produce unit surplus together,  the firm's action is to make an offer $\pi_f\in[0,3]$, representing their wages:  salary, benefits, etc.  Then the candidate just chooses whether to accept or reject this offer.  If the candidate accepts, the firm receives $2-\pi_f$ utility and the candidate receives $\pi_f-1$.  For example, $\pi_f=1$ represents the break-even point for the candidate and $\pi_f=3/2$ represents an even split of the surplus.  If the candidate rejects, the market player $m$ takes an action $(o(f),o(x))\in[0,3]^2$, and the firm receives its outside option $2-o(f)$ and the candidate receives its outside option $o(x)-1$.  Given that in practice the market player $m$ isn't a single agent, but rather itself is the result of another, possibly equilibrium, outcome, we do not explicitly assign payoffs to this player.  Thinking of $m$ as the ``Nature'' player, the market plays a fixed strategy.  We denote this game $M_{x,f}$.

We let the firm $f$ use a classifier $g:X\rightarrow \Delta([0,3])$ mapping candidates to probability distributions over offers to decide on a strategy $\pi_f=g(x)$.  We use $\pi_f$ and $g(x)$ interchangeably for this reason.  In this work, we do not consider what the features are, or even how the classifier is learned.

Thus $g$ represents the strategy of $f$ for a collection of games, each game $M_{x,f}$ parameterized by $x$.  If $g(x)$ is an action played at equilibrium in every game, we call the classifier $g$ an equilibrium strategy.

There are infinitely many Nash equilibria in $M_{x,f}$.  For example, there is an equilibrium where the firm offers $g(x)=o(f)=o(x)$.  (We give a slight generalization of this fact in Proposition \ref{prop:market_sce}.)  Here, the firm offers market-rate wages, and leaves the candidate indifferent to whether or not to accept the offer.  However, simply the fact of many equilibria doesn't fully capture scenarios $A$ and $B$.  While there are equilibria that features both low and high values of $\pi_f$, this requires the market play different values $o(x)$ depending on the player.  We don't want to assume this will happen a priori, because the surplus for each candidate is identical.  The scenarios described above are for the same market, meaning scenarios A and B wouldn't occur simultaneously.  Nash equilibria also do not take advantage of the impact of the players' beliefs.

Before turning to SCE, we will want simpler notation for a player's beliefs about what the outside options $o(f)$ and $o(x)$ are.  We will use $o_i(j)$ to denote what $i$ believes $j$'s \emph{expected} outside option is under a given belief $\mu_i$.
More formally, recall that a belief $\mu_i$ for any player $i$ is a distribution over behavior strategies for each opponent.  In our case, beliefs will not correlate $m$'s strategy with any other strategies, and moreover, $m$ only plays at one information set, so abusing notation slightly, we can call $\mu_i(m)$ $i$'s belief about what $m$ will play, i.e. a distribution over outside options $o\in[0,3]^2$ indexed by $f$ and $x$.  Then let $o_i(j) = E_{o\sim\mu_i(m)}[o(j)]$, $i$'s belief about $j$'s expected outside option, $i,j\in X\cup\{f\}$.
\subsection{Blatant unfairness in the market}

Particular SCE in $M_{x,f}$ will turn out to exactly be the unfair behavior from Section \ref{sec:example}, so our first order of business is finding them.  Not only will equilibria in these games form a concrete model for the scenarios in Section \ref{sec:example} and therefore for relational unfairness, but they will allow us to show a) this unfairness is not solely a function of a classifier, and therefore a definition of (un)fairness cannot be defined only as a function of a classifier, as much previous work does; and b) extract the relevant features of the undesirable behavior to be able to define a concrete definition of relational unfairness.

Importantly, we can find SCE where $o_x(f)$ and $o_f(f)$ are not only not equal to $o(f)$, they can be arbitrarily far apart from $o(f)$ (and likewise $o_x(x)$ and $o_f(x)$ can be arbitrarily far from $o(x)$):  the player's beliefs about their outside options are nowhere close to their true outside options.  This happens when the candidate accepts the firm's offer because the candidate believes their outside option is no higher than the firm's offer, and then neither of them will find out what the market would have done.  Moreover, the firm's offer may be arbitrarily low, as long as the firm believes they won't get any more from the market.  Since the market's action is now off the equilibrium path, neither the firm nor candidate need be correct about the market's strategy, justifying their beliefs.

\begin{prop}\label{prop:sce}
If beliefs satisfy $o_f(f) \ge o_f(x)$ and $o_f(x) \ge o_x(x)$, the strategy profile where, for all candidates $x$, the firm offers $g(x) = o_f(x)$ and the candidate accepts that offer (and acts arbitrarily on other offers) is an SCE witnessed by those beliefs.  Consequently, for any pure strategy $g:X\rightarrow [0,3]$ played by $f$, and any strategy by the market player, there exists a set of beliefs where $g$ is an SCE.

\end{prop}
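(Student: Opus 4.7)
The plan is to verify directly the two defining conditions of an SCE for the proposed profile: each player's strategy is a best response to some belief $\mu_i$, and $\mu_i$ agrees with actual play at every information set reached on the equilibrium path. The only on-path information sets are the firm's root and the candidate's information set at offer $o_f(x)$; the market's information set is off-path, so beliefs about the market are constrained only by the given expectations $o_f(\cdot)$ and $o_x(\cdot)$.

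First I would construct the firm's belief $\mu_f$: it places the appropriate expectations $o_f(f)$ and $o_f(x)$ on the market's outside options, and it says the candidate accepts every offer at least $o_f(x)$ and rejects every smaller offer. Under this belief, an offer $\pi_f \ge o_f(x)$ yields the firm $2 - \pi_f$, maximized on that range at $\pi_f = o_f(x)$ with value $2 - o_f(x)$, while an offer $\pi_f < o_f(x)$ yields $2 - o_f(f)$; the assumption $o_f(f) \ge o_f(x)$ then ensures $g(x) = o_f(x)$ is a best response. Next I would construct $\mu_x$: the candidate believes (correctly) that the firm plays $o_f(x)$ deterministically, with market expectations fixed so the candidate's expected outside option equals $o_x(x)$. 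At the on-path decision, accepting yields $o_f(x) - 1$ and rejecting yields $o_x(x) - 1$, so $o_f(x) \ge o_x(x)$ makes accepting weakly optimal; at off-path information sets, any action is a best response since the firm is believed to play $o_f(x)$ with probability one. On-path beliefs agree with actual play by construction, establishing the first claim.

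For the consequence, given any pure strategy $g : X \to [0,3]$ and any true market strategy, set $o_f(x) := g(x)$, $o_f(f) := 3$, and $o_x(x) := 0$ for every candidate $x$; these trivially satisfy $o_f(f) \ge o_f(x) \ge o_x(x)$, so the first part of the proposition yields beliefs witnessing $g$ as an SCE regardless of the market's actual strategy. The one point requiring care is the specification of the firm's off-path belief about the candidate, since alternative offers range over the continuum $[0,3]$ and I must rule out any profitable deviation. The threshold belief above handles $\pi_f > o_f(x)$ (where assumed acceptance strictly lowers the firm's payoff) and $\pi_f < o_f(x)$ (where assumed rejection caps the payoff at $2 - o_f(f) \le 2 - o_f(x)$) uniformly, which is exactly why the two hypotheses of the proposition suffice.
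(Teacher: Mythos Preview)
Your proposal is correct and follows essentially the same approach as the paper: construct a threshold belief for the firm (candidate accepts iff the offer is at least $o_f(x)$), verify best responses using the two inequalities, note that the market's information set is off-path so beliefs there are unconstrained, and then for the consequence pick $o_f(x)=g(x)$ and $o_f(f)=3$. The only cosmetic difference is that you set $o_x(x)=0$ whereas the paper sets $o_x(x)=g(x)$; both satisfy $o_f(x)\ge o_x(x)$, so the argument goes through identically.
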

Before moving on, it should be noted there is another SCE where the market never plays: If $o_f(f) < o_f(x)$ and $o_f(f) \ge o_x(x)$, then the strategy profile where the firm plays $g(x)=o_f(f)$ and the candidate accepts is an SCE.  This SCE, however, is unrealistic in the sense that a ``rational'' implication of the firm's own beliefs is the candidate shouldn't accept the offer (because $o_f(f) < o_f(x)$) but the firm also believes that they will accept the offer anyway (since an SCE requires the firm's beliefs to be true on information sets that are reached).

\begin{proof}[Proof of Proposition \ref{prop:sce}]
Suppose the firm plays $g(x) = o_f(x)$ and believes the candidate will accept if and only if $g(x) \ge o_f(x)$.  More formally, we set $\mu_f(x)$ to be a distribution with probability mass $1$ on the function that maps each node corresponding to an offer $z$ to the accept action if and only $z \ge o_f(x)$.  (Again, we're abusing notation slightly by referring to $f$'s beliefs about $x$'s behavior strategy as $\mu_f(x)$ without specifying a complete distribution over all player's behavior strategies.)  Further suppose the candidate correctly believes the firm will play $o_f(x)$, and accepts this offer.

The candidate believes they will receive $g(x)-1$ utility if they accept or $o_x(x)-1$ in expectation otherwise, so if $o_f(x) \ge o_x(x)$, the candidate believes accepting is a best response.  The firm believes they will receive $2-g(x)$ if the candidate accepts, and in that case, the maximum utility they believe they can get is $2-o_f(x)$ (since then $g(x) \ge o_f(x)$).  The firm also believes that if the candidate rejects the offer they will receive $2-o_f(f)$, so offering $o_f(x)$ is a best response when $o_f(f) \ge o_f(x)$.  Since the candidate accepts this offer, the only information sets reached are the firm's action and the candidate's action, and both of their actions occur with probability one under their own beliefs.

Thus, the above beliefs witness that these best responses form an SCE when $o_f(f) \ge o_f(x)$ and $o_f(x) \ge o_x(x)$.  Given an arbitrary deterministic classifier $g:X\rightarrow [0,3]$, for each $x$, there are always beliefs that satisfy these constraints: set the beliefs such that $g(x) = o_f(x) = o_x(x)$ and $o_f(f)=3$.  Then it immediately follows that for all $x$, the strategy profile where the firm plays $g(x)$ and the candidate accepts is an SCE.
\end{proof}

When $o_f(x)$ and $o_x(x)$ are small, then $g(x)$ will be as well:  this is scenario A.  The firm and the job candidate both believe the candidate has a small outside option, and the candidate accepts the low wages.  When $o_f(x)$ and $o_x(x)$ are large (and $o_f(f)$ and $o_f(x)$ even larger), then $g(x)$ will be large:  this is scenario B.  Both are SCE regardless of how large or small each candidate's actual outside option is.  With these SCE in mind, we can observe what happened in scenario A to make this outcome so undesirable. Scenario $A$ appears unfair to the candidate when $g(x)$ is sufficiently small, say $g(x) < 1$, and the candidate receives negative utility.  The candidate and the firm could have helped each other by splitting the surplus, but didn't.  The firm takes all of the surplus and still leaves the candidate with need.  It was not a result of any inherent property of the candidate, firm, or market conditions, since the outcome did not depend on the outside options $o(f),o(x)$.  There was no rational strategy the candidate could play to change this situation, since it is at equilibrium.  There was an alternative equilibrium, scenario B, where they did split the surplus, meaning scenario A wasn't the only outcome, even for self-interested players.  (We could also consider the equilibrium where $g(x) > 2$ and the firm receives negative utility, but we focus on $g(x) < 1$ since it is sometimes less clear if the firm should even count as a moral agent, as it may not represent any one individual.  It is also a symmetric situation, covered qualitatively by our results anyway.)  

It may be tempting to blame the firm for scenario A, as they failed to offer the candidate any of the surplus.  Yet to wholly blame the firm is a result of a normative framework grounded in EOP, rather than Andersonian relational equality.  The existing distributional notions of fairness grounded in EOP are, at least implicitly, blaming the firm by making fairness a property of the classifier used by the firm (and possibly the distribution over inputs to the classifier), and constraining it so that undesirable behavior cannot occur.  Such models attempt to ensure that the candidates get the outcome they ``deserve.''  Yet we now provide an SCE where the firm plays the exact same classifier in the exact same game, but the outcome no longer appears unfair at all (at least in the context of this game).  This not only makes it hard to place all of the blame on the firm, but it implies that unfairness is not just a property of the classifier, but of the entire market, including the players' beliefs.  In this SCE, which is also Nash, the difference is entirely in the beliefs, which are now correct about their outside options.  Whereas before, the rational candidate was forced into accepting an arbitrarily small offer regardless of their true outside options, now when the firm makes an offer below the candidate's outside option, the candidate rejects the offer and they both get market rates.  

\begin{prop}\label{prop:market_sce}
For any pure strategy $g:X\rightarrow [0,3]$ played by $f$, if beliefs satisfy $o(f) = o_f(f) = o_x(f) \le o(x) = o_f(x) = o_x(x)$, the strategy profile where the firm offers $g(x) \le o(f)$ and $x$ accepts if and only if $g(x) \ge o(x)$ is an SCE and a Nash equilibrium witnessed by those beliefs.
\end{prop}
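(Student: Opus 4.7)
The plan is to construct explicit beliefs that witness the equilibrium and then verify the best-response and belief-correctness conditions for both SCE and Nash. First I would specify the beliefs: let $\mu_f$ place probability one on the behavior profile in which the candidate accepts iff $g(x) \ge o(x)$ and the market plays its actual fixed strategy; symmetrically, let $\mu_x$ place probability one on the profile where the firm plays $g$ and the market plays its actual strategy. Because these beliefs match the market's true strategy, the expected-outside-option identities $o_f(f) = o_x(f) = o(f)$ and $o_f(x) = o_x(x) = o(x)$ are immediate from the definition $o_i(j) = E_{o \sim \mu_i(m)}[o(j)]$.

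Next I would check the two best-response conditions. Under $\mu_f$, an offer $z < o(x)$ is rejected, yielding $2 - o(f)$, while an offer $z \ge o(x)$ is accepted and yields $2 - z \le 2 - o(x) \le 2 - o(f)$ since $o(f) \le o(x)$; hence any $z \le o(f)$, including $g(x)$, is a best response. Under $\mu_x$ the candidate expects $z - 1$ on accepting any hypothetical offer $z$ and $o(x) - 1$ on rejecting, so ``accept iff $z \ge o(x)$'' is a best response at every one of the candidate's information sets.

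Finally I would verify belief correctness. For SCE I need correctness only at information sets reached on the equilibrium path: the candidate's node at offer $g(x)$ is reached and the prescribed action (reject, since $g(x) \le o(f) \le o(x)$) matches $\mu_f$; whenever $g(x) < o(x)$ the market's node is reached and beliefs about it are correct by construction. For Nash I additionally need correctness at off-path information sets, which holds automatically because $\mu_f$ and $\mu_x$ put probability one on the behavior strategies actually played, at every information set. The only delicate case is the degenerate boundary $g(x) = o(f) = o(x)$, in which the candidate accepts and the market's node is unreached; I expect this to be the main thing to be careful about, but SCE is still witnessed (correctness is only required on reached nodes) and Nash still holds since the specified beliefs are correct globally by construction, so the claim goes through uniformly.
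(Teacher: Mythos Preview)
Your proposal is correct and follows essentially the same route as the paper: specify beliefs that put full mass on the opponents' actual behavior strategies, verify the firm's and candidate's best-response conditions via the comparison $2-z \le 2-o(x) \le 2-o(f)$ for accepted offers and $2-o(f)$ for rejected ones, and then note that the beliefs are correct at every information set, giving both SCE and Nash. Your treatment is, if anything, slightly more explicit than the paper's about the belief construction and the SCE-versus-Nash distinction, and you correctly single out the boundary case $g(x)=o(f)=o(x)$; the paper handles this via the same three-case breakdown at the end of its best-response argument.
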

\begin{proof}
Suppose the firm believes the candidate will accept if and only if $g(x) \ge o(x)$, and plays $g(x)$ such that $g(x) \le o(f)$.  Suppose the candidate correctly believes the firm will play $g(x)$.  The candidate's only two choices are to accept $g(x)$ or take their outside option $o_x(x)=o(x)$, so their best response is to accept if and only if $g(x) \ge o(x)$.  The firm, meanwhile, will never increase their utility by offering more than $o(f)$, as either the candidate rejects and they get exactly $2-o(f)$, or the candidate accepts and they receive strictly less.  Otherwise, they will receive exactly $2-o(f)$ since $o(f)\le o(x)$:  If $g(x)\le o(f) < o(x)$, then the candidate rejects and they receive their outside options, or similarly, if $g(x) < o(f) = o(x)$, again the offer is less than $x$'s outside option and they reject.  The only other case is when $g(x) = o(f) = o(x)$, and now the candidate accepts, but again the firm receives $2-g(x)=2-o(f)$.  So any offer $g(x) \le o(f)$ is a best response.  Moreover, by supposition, the candidate and firm's beliefs are correct on all information sets with probability one, so this is not only an SCE but a Nash equilibrium. 
\end{proof}
Suppose we are given a classifier where $g(x) \le o(f) < o(x)$ for all candidates, and $g(x)$ is sufficiently smaller than $o(x)$, say $g(x) < 1$ but $o(x)$ is at least $3/2$.  This is at equilibrium under Proposition \ref{prop:sce}, and in this SCE the candidate gets $g(x)$ and therefore negative utility, resulting in scenario A and what appears to be a blatantly unfair outcome for the candidate.  This same classifier is also at equilibrium under Proposition \ref{prop:market_sce}, but now the candidate gets the much larger $o(x)$, which is more similar to scenario B.  It's not necessarily clear if we should consider this equilibrium fair -- perhaps receiving market rates is undesirable as well -- but at least solely in the context of this game, there's no reason to call it unfair.  The classifier is the same but the normative implications are wildly different, implying no definition of fairness can distinguish between these two scenarios using only the classifier.  This is a significant issue because the existing computational definitions of fairness, discussed in Section \ref{sec:background}, are defined as a function of the classifier.

It might be possible to attempt to adopt computational or economic definitions to this setting in an attempt to circumvent this outcome.  It might be possible to label the classifier as unfair in both equilibria using some version of an existing definition, for example.  However, in Section \ref{sec:fair_failures}, we will provide evidence that existing fairness definitions will label classifiers employed in unfair equilibria as fair, ruling out existing definitions and making small tweaks to existing definitions unlikely to be effective.  Moreover, these definitions will do this not just for this particular game, but for a wide class of games.  To do so, we need to introduce a formal definition of \emph{blatant unfairness} for equilibria.  We do not attempt to define fairness here, simply blatant unfairness, similarly to the way that Dinur and Nissim \cite{DinurN03} did not define privacy, simply blatant non-privacy.

This definition, for any equilibrium in a two-player complete-information game, says that the equilibrium is blatantly unfair when it has the same features as scenario A did, which we had already decided was blatantly relationally unfair:  in scenario A, there was a player $x$ who did not get positive utility at equilibrium, and yet there was another equilibrium where both players received positive utility.  We do not insist on $x$ having any particular features, just as in scenario A the outcome did not depend on the surplus generated or their outside options.

\begin{defn}\label{def:2p_blatant_unfairness}
An equilibrium $\pi$ is \emph{blatantly unfair} with respect to $x$ in a two-player complete-information game with players $x$ and $y$ if the payoff for $x$ under $\pi$ is non-positive, but there is an equilibrium where both $x$ and $y$ receive positive payoffs.
\end{defn}

The equilibrium $\pi$ where $x$ receives non-positive payoff is scenario A, whereas the alternative equilibrium where both players receive positive playoffs is scenario B.  If this definition is to be successful, it must not only capture the blatantly relationally unfair equilibria we already described, but at least some of the larger aims of Andersonian relational equality, and here, we focus on capturing one form of unequal standing in a social relationship.  As such, the game in this definition represents the social relationship between players $x$ and $y$:  the outcomes of the game determine how each player's actions and beliefs affect the other player.  Under $\pi$, the player $y$ exerts sufficient pressure to ensure that any response by player $x$ results in a negative utility for $x$.  This pressure is modeled by the complete-information game itself, and since the game is arbitrary, this is sufficiently general to capture both some direct social pressure, like a threat, or something more indirect, like they both believe $x$ suffers from racial discrimination.  Either way, the pressure results from the relationship itself, as brought about by the actions they play, rather than any outside forces, as there is an alternative equilibrium where they both cooperate with each other and both receive positive payoffs.

We refer to this equilibrium as \emph{blatantly} relationally unfair because the payoff for $x$ is not just suboptimal, it is negative, and both players know it.  We use the term \emph{blatant unfairness}, rather than blatant \emph{relational} unfairness, to highlight that this definition captures forms of distributional inequality as well (it is a function of payoffs).  When referring to a classifier $g$ played at a blatantly unfair equilibrium, to indicate that the equilibrium is blatantly unfair when that classifier is employed as a strategy in at least one game, we will refer to the classifier itself as \emph{blatantly unfair}, simply for convenience.

Anderson's critique of distributional equality motivates the features of this definition.  We insist on an absolute level of welfare -- utility must be positive if possible -- in order to prevent neglect of the social equality of workers.  We insist on involving multiple players to capture relationships rather than just outputs.  We insist on examining entire equilibria, rather than classifiers, to prevent decision makers from claiming fairness in narrow settings while manipulating who gets to opt in and who get to opt out of those settings.  And to avoid moral paternalism, we insist that establishing blatant unfairness does not require justifying the moral deserts of the candidates (or firms) by first establishing that they are sufficiently qualified, or skillful, or possess any particular inherent properties or attributes.  This last insistence also has the beneficial effect that our definition, unlike group fairness, does not necessarily require gathering data about sensitive attributes, or force us to define what is or should be a sensitive attribute.  Nonetheless, our definition of blatant unfairness should help protect minoritized populations from discriminatory outcomes, including racial discrimination.  At the very least, since racial discrimination may often be encoded as beliefs about outside options, preventing blatant unfairness is necessary for preventing racial discrimination in hiring markets.  It may not be sufficient for preventing racial discrimination, depending on how the outcomes of the market affect other relationships outside of the market.  Our focus in this work, though, is expanding the domain where discrimination is evaluated from the outcome of a single classifier to a much larger system: the market. 

Our approach also attempts to encode the fact that ``democratic equality regards two people as equal when each accepts the obligation to justify their actions by principles acceptable to the other, and in which they take mutual consultation, reciprocation, and recognition for granted.'' \cite{Anderson:1999wh}  If two people accept this obligation, they must actually find a principle by which actions may really be acceptable to the other, and thus we attempt to define such a principle, at least under this particular set of simplifying assumptions.  Under $\pi$, player $y$ has failed to take actions which could ever be acceptable to $x$.

Since we care not just about the relative utility that the two players get, but whether they receive negative payoffs, the utility functions are absolute, in the sense that they cannot be arbitrarily renormalized up to ordering of preferences over actions.  Rather, to define blatant unfairness, our assumption is that there is a space of payoffs that are acceptable to a player, and that 1) all payoffs acceptable to a player are larger than all payoffs that are unacceptable to that player and 2) both players agree on what each other's "acceptable space" of payoffs is.  These assumptions prevent us from using a utility function that is everywhere non-negative, such as declaring the payoff for the candidate to be the salary of the job, which is always non-negative, without considering that they start with a need for a job, and therefore negative utility.  Any utility function with an acceptable space of utilities can then be renormalized to set the threshold between acceptable and non-acceptable utilities to be zero, which is why we use this threshold for the definition of blatant unfairness.  Some additional assumptions along these lines are necessary anyway in this setting, as Andersonian relational equality requires finding a principle of action that is acceptable to the other, thus at the very least necessitating each player to know what is acceptable to themselves.  Complete-information games are thus a very natural starting place, because they already agree on each other's payoffs.

Implicitly, this definition is parameterized not only by the threshold at which payoffs are no longer acceptable, but also the kind of equilibrium, here SCEs.  This places significant normative weight on which kind of equilibrium we use as a solution concept.  The choice of which kind of equilibrium to use -- should they be approximate equilibria or exact? -- could make big differences to which strategy profiles are blatantly unfair.  And small changes to a game could erase or create equilibria, which would change whether a given strategy profile is blatantly unfair or not.  This appears to be an unavoidable choice to have to make:  the status of a relationship depends not only on a person's behavior now, but how another person reacts to changes in that behavior.  They suffer from low standing in a relationship not necessarily because the other person is causing them harm, but because that other person holds attitudes about them that makes their inability to renegotiate the relationship a consistent feature of that relationship.  Here, SCEs were chosen to better represent the effect of beliefs on action than Nash equilibria or the like, which explains how payoffs can fail to reflect the underlying qualifications of a candidate, but certainly other choices should be investigated in the future.

Perhaps counterintuitively, our use of SCEs means it is sometimes blatantly unfair for the firm to have ``correct'' beliefs:  if the firm correctly believes the candidate's outside option is low, then the Nash equilibrium where they get market rates results in the candidate getting negative utility.  But there is an equilibrium where both firm and candidate both receive positive utility, one where the firm deludes itself into believing the candidate's outside option is high, gives it to them, which ensures the firm never observes the market.  But just because the former equilibrium was unfair, doesn't mean we should move to the latter equilibrium: labeling an equilibrium blatantly unfair does not make the equilibrium where both get positive payoffs fair.  In this example, it might instead mean changing the underlying market conditions.

\subsection{The failures of group fairness and other definitions}\label{sec:fair_failures}

In this section, we show that it is possible for a classifier to be part of a blatantly unfair equilibrium, but simultaneously satisfy group fairness and other well-known definitions of fairness.  When we interpret these other definitions as moral axioms, this is an incompatibility result, showing that satisfying any one of these definitions of fairness does not imply non-blatant unfairness.  But if we are convinced by the examples detailed above, then this result is stronger:  The examples enumerated in this result show why group fairness and other definitions fail to prevent unfairness.

To do so, we consider a game $G_{x,f}$ like before, but now the structure of the game is arbitrary, except that $f$ still plays a classifier $g$ as their strategy, each $g(x)$ their strategy for $G_{x,f}$.  We demonstrate that as long as a blatantly unfair equilibrium $\tilde{\pi}$ exists in this game, we can choose some player $x_0$ to receive that strategy, i.e. $g(x_0) = \tilde{\pi}_f$, so that $g$ is blatantly unfair, and then extend $g$ to the rest of the players so that it is group fair, or individually fair, etc.  Technically, this is not a particularly difficult result: there is considerable leeway to extend $g$, since the classifier is not constrained in any way on the remaining players $x\in X$.  Indeed, we may extend this result to the setting where each player $x$ can play in a game with different structure and payoffs, but for notational simplicity we do not do so here.

\begin{prop}\label{prop:incompatibility}
Consider a two-player complete-information game $G_{x,f}$ with players $x\in X$ and $f$, and $f$'s pure strategies are of the form $g(x)$, where $g:X\rightarrow D$, $D\subset\mathbb{R}$.  Suppose there exists a blatantly unfair equilibrium $\tilde{\pi}$ in this game.  Then:
\begin{enumerate}
\item If $D$ is finite or a real interval, and for all $y$ and for all $z$ in the image $F_1(D,Y(X))$ there exists a prediction $d$ such that $F_1(d,y) = z$, then there is a strategy $g$ for $f$ that is $(F_1,F_2)$-group fair and blatantly unfair.
\item If $\tilde{\pi}_f$ is a pure strategy, $|Y(X)|\le |F_2(D,Y(X))|$, and for all $y$ and for all $z$ in the image $F_2(D,Y(X))$ there exists a prediction $d$ such that $F_2(d,y) = z$, then then there is a strategy $g$ for $f$ that is $(F_1,F_2)$-group fair and blatantly unfair.
\item For all $M,m$, there is a strategy $g$ for $f$ that is $(M,m)$-individually fair and blatantly unfair.
\item Consider a causal graph which includes as nodes all of $X$ as well as a sensitive feature $A$ and the output of the strategy, i.e.\ $D$.  Suppose $A$ has out-degree $0$.  Then there is a strategy $g$ for $f$ which, with respect to $A$, is individually causally fair, exhibits no taste-based discrimination, and is blatantly unfair.
\end{enumerate}
\end{prop}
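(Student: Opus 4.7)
The plan hinges on a single structural observation: blatant unfairness of the classifier $g$ is guaranteed as soon as $g(x_0) = \tilde{\pi}_f$, where $x_0$ is the player for whose game $G_{x_0,f}$ the blatantly unfair equilibrium $\tilde{\pi}$ exists. Indeed, that choice reproduces $\tilde{\pi}$ at $x_0$'s game, and Definition \ref{def:2p_blatant_unfairness} is witnessed by a single player regardless of how $g$ acts elsewhere. So each part reduces to freely extending $g$ over $X \setminus \{x_0\}$ (and, in parts 1 and 2, aligning the forced value at $x_0$) so as to satisfy the relevant fairness condition.

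For part 1, I would pick a single value $z_0$ in the image of $F_1$ realized by $(\tilde{\pi}_f, Y(x_0))$, and, using the surjectivity hypothesis, select $g(x)$ for each $x \neq x_0$ so that $F_1(g(x), Y(x)) = z_0$ (randomizing if $\tilde{\pi}_f$ is mixed, so that $F_1(g(x), Y(x))$ has the same distribution as $F_1(\tilde{\pi}_f, Y(x_0))$). Then $F_1(g(X), Y)$ is distributionally independent of $X$ and hence of $A$, giving $(F_1, F_2)$-group fairness conditional on anything. For part 2, purity of $\tilde{\pi}_f$ makes $v_0 := F_2(\tilde{\pi}_f, Y(x_0))$ a single element of $F_2(D, Y(X))$; the cardinality assumption lets me construct an injection $\phi : Y(X) \to F_2(D, Y(X))$ with $\phi(Y(x_0)) = v_0$, and the surjectivity hypothesis supplies, for each $y$, some $d_y \in D$ with $F_2(d_y, y) = \phi(y)$. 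Setting $g(x) = d_{Y(x)}$ for $x \neq x_0$ makes $g$ a deterministic function of $Y$, and injectivity of $\phi$ makes $F_2(g(X), Y)$ invertible back to $Y$; so conditional on $F_2(g(X), Y)$ both $Y$ and $F_1(g(X), Y)$ are determined, yielding trivial independence from $A$.

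For part 3, take $g \equiv \tilde{\pi}_f$ as a constant classifier; then $M(g(x), g(y)) = M(\tilde{\pi}_f, \tilde{\pi}_f) = 0 \le m(x,y)$, since the paper's convention makes the statistical distance vanish on identical inputs and $m$ is a non-negative metric. Part 4 is even easier: because $A$ has out-degree zero, intervening on $A$ never propagates to any downstream node, so $g(x, a)$ and $g(x, a')$ are distributionally identical for \emph{any} choice of $g$, including the one with $g(x_0) = \tilde{\pi}_f$ extended arbitrarily; and the absence of any outgoing edge from $A$ immediately rules out a taste-based edge $A \to D$.

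The main obstacle is really confined to part 2: ensuring that the injection $\phi$'s value at $Y(x_0)$ can be matched to what $\tilde{\pi}_f$ actually produces under $F_2$. This is precisely why the hypothesis there insists on $\tilde{\pi}_f$ being pure — otherwise $F_2(\tilde{\pi}_f, Y(x_0))$ would be a distribution rather than a single element of $F_2(D,Y(X))$, and the step that makes $F_2(g(X),Y)$ recover $Y$ would break. Part 1 sidesteps the issue because the conditional-independence argument goes through with a constant (or equidistributed) $F_1$, which requires no alignment at $x_0$ beyond picking $z_0$ in the support of $F_1(\tilde{\pi}_f, Y(x_0))$.
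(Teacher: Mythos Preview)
Your approach matches the paper's: fix $g(x_0)=\tilde\pi_f$ to secure blatant unfairness, then extend $g$ freely to meet the fairness constraint --- equidistributed $F_1$ for part~(1), $F_2$ determining $Y$ for part~(2), and the constant classifier for parts~(3) and~(4).

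One small gap in your part~(2): you set $g(x_0)=\tilde\pi_f$ but $g(x)=d_{Y(x)}$ for $x\neq x_0$, then claim $g$ is a function of $Y$. That fails if there is another $x'$ with $Y(x')=Y(x_0)$ and $d_{Y(x_0)}\neq\tilde\pi_f$: then $g(x_0)\neq g(x')$ despite equal labels, and conditional on $F_2=v_0$ the value $F_1(g(\cdot),Y(\cdot))$ is no longer determined, so the independence argument breaks. The fix is immediate --- since $F_2(\tilde\pi_f,Y(x_0))=v_0=\phi(Y(x_0))$, simply \emph{choose} $d_{Y(x_0)}=\tilde\pi_f$ and set $g(x)=d_{Y(x)}$ for all $x$. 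The paper instead leaves the injection unconstrained at $Y(x_0)$ and repairs any collision via a three-case analysis (swapping assignments when $\tilde\pi_f$ happens to equal some other $d_y$); your forced-alignment route, once patched, is the cleaner of the two.
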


In addition to an incompatibility result, the strategies $\pi$ that evince this incompatibility are also examples that we believe demonstrate some of the failures of these fairness definitions.  Since blatant unfairness is an individual-level definition, it is hopefully intuitive why group fairness should not suffice to prevent blatant unfairness.  But there is another fundamental issue with group fairness.  For example, consider statistical parity, which requires $P[g(x) \le z | A =a] = P[g(x) \le z]$ for all $a\in A$ and $z\in D$.  Part (1) implies that for \emph{any} sensitive attribute $A$ and distribution over $X$, there is a blatantly unfair strategy $g$ that satisfies statistical parity.  We set $g(x)$ to be the same for all $x$, namely the (possibly mixed) strategy $\tilde{\pi}_f$, the strategy for $f$ guaranteed to be blatantly unfair by assumption.  Constant functions always satisfy statistical parity, which only requires the same average treatment to all groups defined by $A$.

While we were driven to consider relational equality in part because group fairness risks paternalistic control and ignoring social inequality, here we see that statistical parity is also relationally unfair despite guaranteeing equality of outcome when everybody is treated equally badly.  This also holds for the rest of group fairness and EOP, since these are also content to treat fairness (or justice, or equality) as a comparison between people. For example, while equalized odds ($F_1(d,Y(x)) = d$ and $F_2(d,Y(x))=Y(x)$) was introduced to correct perceived failures of statistical parity \cite{Hardt:2016wv}, our concerns hold just as true for equalized odds.  Even if we ignore existing concerns about equalized odds, it is still a group-level constraint that permits everybody, including those qualified, to be treated badly, so long as it does so on average equally across $A$, given those qualifications.

Part (1) of Proposition \ref{prop:incompatibility} covers a generalization to any group fairness where $F_1$ is surjective via its first coordinate, which includes statistical parity, equalized odds, accuracy equality, etc.  To show this, we note that it suffices to use a strategy which is not necessarily a constant, but on every $x$, the distribution $F_1(g(x),Y(x))$ is the same.  

\begin{proof}[Proof of (1)]
First, suppose for all $y\in Y$ and for all $z$ in the image $F_1(D,Y(X))$ there exists a prediction $d\in D$ such that $F_1(d,y) = z$.  Consider the function $F_1(\cdot,y)$ onto its image.  Since it is by definition surjective, it always has a right inverse, i.e.\ a function $i_y:F_1(D,y)\rightarrow D$ such that for any $z\in F_1(D,y)$, $F_1(i_y(z),y) = z$.  Moreover, a measurable right inverse exists when $D$ is finite or a real interval -- more generally, it can be any uncountable Polish space \cite{lowther_2016} -- so we may assume $i_y$ is measurable.  Let $x_0\in X$.  Then let $g(x_0)= \tilde{\pi}_f$ and for all other $x\in X$, let $g(x) = i_{Y(x)}(F_1(\tilde{\pi}_f,Y(x_0)))$.  By supposition, $g$ is blatantly unfair.  This is well defined when the support of $F_1(\tilde{\pi}_f,Y(x_0))$, which is contained in $F_1(D,Y(x_0))$, is in the domain of $i_{Y(x)}$.  So it suffices to be able to find, for any $z\in F_1(D,Y(x_0))$, a $d\in D$ such that $F_1(d,Y(x))=z$, which we have already assumed.

Thus we can conclude $F_1(g(x),Y(x)) = F_1(\tilde{\pi}_f,Y(x_0))$ for all $x$, and so for any $a\in A$,
\begin{align*}
& P[F_1(g(x),Y(x)) \le z | A(x)=a, F_2(g(x),Y(x))] \\
&= P[F_1(\tilde{\pi}_f,Y(x_0)) \le z | A(x)=a, F_2(g(x),Y(x))]\\
&= P[F_1(\tilde{\pi}_f,Y(x_0)) \le z],
\end{align*}
where the probabilities are over the randomness of $x$ and the randomness of the functions, and the last equality follows from the independence of $F_1(\tilde{\pi}_f,Y(x_0))$ from $x$.  Hence $g$ is $(F_1,F_2)$-group fair.
\end{proof}

While part (1) covers a wide range of group fairness definitions, it does not include all popular group fairness definitions, most notably \emph{sufficiency} \cite{barocas-hardt-narayanan}, also known as fair \emph{calibration} \cite{Chouldechova17}, where $F_1(d,Y(x)) = Y(x)$ and $F_2(d,Y(x)) = d$.  We cannot use a constant function anymore, as it will not satisfy sufficiency except under very narrow conditions on $Y$.  The easiest way to ensure sufficiency is if we instead assign a unique decision $d_y$ for every possible label $y\in Y(X)$ (and in general, all we need is a distinct value of $F_2$ for every label, which requires $\tilde{\pi}_f$ to be pure to ensure uniqueness).  To some $x_0$, we assign the decision $g(x_0) = \tilde{\pi}_f$, guaranteeing $g$ is blatantly unfair, and ensure that no other player $x$ with a different label from $x_0$ receives that same decision.
\begin{proof}[Proof of (2)]
Now, suppose for all $y$ and for all $z$ in the image $F_2(D,Y(X))$ there exists a prediction $d$ such that $F_2(d,y) = z$.  If $|Y(X)|\le |F_2(D,Y(X))|$, then there exists an injection $i:Y(X)\rightarrow F_2(D,Y(X))$.  For $y\in Y(X)$, let $d_y$ be the prediction satisfying $F_2(d_y,y) = i(y)$.  Let $x_0\in X$.  Now we construct $g$, and there are three cases depending on how $\tilde{\pi}_f$ coincides with the decisions $d_y$.  If it happens to be the case that $d_{Y(x_0)} = \tilde{\pi}_f$, let $g(x) = d_{Y(x)}$ for all $x\in X$.  If there is no $x\in X$ for which $d_{Y(x)} = \tilde{\pi}_f$, let
\[g(x)=
\begin{cases}
d_{Y(x)} \text{ if } Y(x)\neq Y(x_0)\\
\tilde{\pi}_f \text{ if } Y(x)=Y(x_0).
\end{cases}\]  Otherwise, call $x_1$ the value for which $d_{Y(x_1)} = \tilde{\pi}_f$.  Then let
\[g(x)=
\begin{cases}
d_{Y(x)} \text{ if } Y(x)\neq Y(x_0), Y(x_1)\\
d_{Y(x_1)} \text{ if } Y(x) = Y(x_0)\\
d_{Y(x_0)} \text{ if } Y(x) = Y(x_1).
\end{cases}\]
By construction, $g(x_0) = \tilde{\pi}_f$ and $g$ is blatantly unfair.  Let $x_j,x_k \in X$.  if \[F_2(g(x_j),Y(x_j)) = F_2(g(x_k),Y(x_k)),\] since $i$ was injective, $Y(x_j)=Y(x_k)$.  By construction of $g$ this in turn implies $g(x_j)=g(x_k)$ and so $F_1(g(x_j),Y(x_j)) = F_1(g(x_k),Y(x_k))$.  Thus for any $a\in A$,
\begin{align*}
& P[F_1(g(x),Y(x)) \le z | A(x), F_2(g(x),Y(x))] \\
= & P[F_1(g(x),Y(x)) \le z | F_2(g(x),Y(x))]
\end{align*} 
and $g$ is $(F_1,F_2)$-group fair.

\end{proof}

Between parts (1) and (2), this result covers a very wide range of group fairness definitions.  This result is not true for all group fairness definitions, as there are values of $F_1$ and $F_2$ for which no classifiers exist that satisfy $(F_1,F_2)$-group fairness except under narrow conditions (e.g.\ $F_1(d,Y(x)) = Y(x)$ and $F_2(d,Y(x))$ is a constant), or for which no classifiers exist that satisfy $(F_1,F_2)$-group fairness after fixing a value $g(x_0)$ (e.g.\ $F_1(d,Y(x)) = d+Y(x)$ and $F_2(d,Y(x))$ is a constant, but $d,Y\in\{0,1\}$).  So we have instead given simple, natural conditions that cover a wide range of popular group fairness definitions.

Both Dworkian individual fairness and individual counterfactual fairness fall victim to the same issue that group fairness did:  treating every individual equally is still unfair when you treat every individual equally badly.  Constant functions are always individually fair.  And the output of a constant function does not change when you change the value of the sensitive attribute, so it is also individually counterfactually fair.

\begin{proof}[Proof of (3)]
It suffices to consider the constant function that fixes $\tilde{\pi}_f$, i.e.\ let $g(x) = \tilde{\pi}_f$ for all $x\in X$.  Then $M(g(x),g(y)) = 0 \le m(x,y)$ for all $x,y\in X$.
\end{proof}

\begin{proof}[Proof of (4)]
Again it suffices to consider the constant function that fixes $\tilde{\pi}_f$, i.e.\ let $g(x) = \tilde{\pi}_f$ for all $x\in X$.  $g$ only depends on $\tilde{\pi}_f$, which in turn cannot depend on $A$, because it has out-degree 0.  No path in the causal graph between $A$ and the output of $g$ immediately implies that it is both individually causally fair and exhibits no taste-based discrimination.
\end{proof}

Taste-based and statistical discrimination take a fundamentally different approach to defining (un)fairness from definitions predicated on prediction error. We might hope that by eliminating both taste-based and statistical discrimination, we could prevent blatantly unfair equilibria -- we would thus be able to \emph{define} discrimination as the combination of taste-based and statistical discrimination.  Unfortunately, Proposition \ref{prop:sce} already implies that this is not the case.  The constant classifier $g(x) = 0$ is an SCE in $M_{x,f}$ (it may be played at equilibrium for all $x$) which results in negative utility for all $x$, and there is an alternative classifier $g(x) = 3/2$ which is an SCE that results in positive utility for all $x$, so $g(x) = 0$ is a blatantly unfair equilibrium strategy.  This blatantly unfair classifier is constant, so is not causally dependent on any sensitive attribute, and thus is not an example of taste-based discrimination.  Moreover, all $x$ create the same surplus with the firm, and so this classifier cannot be an example of statistical discrimination.  This is a concrete example where taste-based and statistical discrimination represent a very different kind of inequality than relational inequality.

Since $M_{x,f}$ has a blatantly unfair equilibrium $\tilde{\pi}$ where $\tilde{\pi}_f = 0$, this classifier satisfies the fairness definitions in parts (1), (3), and (4) of Proposition \ref{prop:incompatibility}.  Assuming that the labels $Y(x)$ are a constant because each candidate's surplus is identical, the fairness definitions in part (2) are also satisfied by the constant classifier $g(x) = \tilde{\pi}_f = 0$.

\begin{cor}
For any sensitive feature $A$ and constant labels $Y(x)$, there is a blatantly unfair equilibrium in $M_{x,f}$ where $f$ uses a constant classifier at equilibrium, but this classifier simultaneously satisfies statistical parity, equalized odds, sufficiency, individual fairness, and displays no taste-based or statistical discrimination.
\end{cor}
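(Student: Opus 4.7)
The plan is to assemble the corollary by combining Propositions \ref{prop:sce} and \ref{prop:incompatibility} and directly verifying the statistical discrimination clause. The key observation is that the specific game $M_{x,f}$ already supplies a blatantly unfair equilibrium in which the firm's strategy is the constant zero classifier, so most of the work is checking that this constant classifier falls inside the sufficient conditions of each part of Proposition \ref{prop:incompatibility}.

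First I would produce the blatantly unfair equilibrium. Using Proposition \ref{prop:sce} with beliefs $o_f(f) = 3$ and $o_f(x) = o_x(x) = 0$, I get an SCE in $M_{x,f}$ in which the firm plays $g(x) = 0$ for every $x$ and the candidate accepts; the candidate's payoff is $0 - 1 = -1$. Applying Proposition \ref{prop:sce} a second time with beliefs $o_f(f) = 3$ and $o_f(x) = o_x(x) = 3/2$ yields an SCE with payoffs $1/2$ for both players. By Definition \ref{def:2p_blatant_unfairness}, the first SCE is blatantly unfair, witnessed by $\tilde{\pi}_f = 0$.

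Next I would verify the fairness definitions via Proposition \ref{prop:incompatibility} applied to the classifier $g(x) = \tilde{\pi}_f = 0$. For statistical parity and equalized odds, part (1) applies since $D = [0,3]$ is a real interval and the relevant $F_1$ coordinate is surjective; the constant classifier is the canonical choice inside that construction. For sufficiency, part (2) applies: because $Y(x)$ is constant, $|Y(X)| = 1 \le |F_2(D,Y(X))|$, and the injectivity condition is vacuous, so the constant $g$ is sufficient/calibrated. Individual fairness is part (3), again witnessed by the constant classifier. For the causal clause of part (4), I would take any causal graph in which $A$ has out-degree zero (so $A$ is a root, no edge leaves it); the output of the constant classifier depends only on $\tilde{\pi}_f$ and not on $A$, giving counterfactual fairness and absence of taste-based discrimination.

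Finally I would handle the no-statistical-discrimination clause by hand, since it is not part of Proposition \ref{prop:incompatibility}. Here the argument is immediate: in $M_{x,f}$ every candidate creates the same unit surplus with the firm, so regardless of how groups are carved out by $A$, the average surplus is identical across groups at every equilibrium, in particular at the blatantly unfair one. I do not anticipate any real obstacle: the proof is essentially a bookkeeping exercise confirming that the preconditions of each cited part of Proposition \ref{prop:incompatibility} are met once $Y$ is constant and $\tilde{\pi}_f$ is pure, with the statistical-discrimination verification being the only piece requiring an argument outside that proposition.
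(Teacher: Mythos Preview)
Your proposal is correct and follows essentially the same route as the paper: exhibit the constant classifier $g(x)=0$ as a blatantly unfair SCE in $M_{x,f}$ via Proposition~\ref{prop:sce} (with the alternative $g(x)=3/2$ SCE witnessing positive payoffs), then invoke parts (1)--(4) of Proposition~\ref{prop:incompatibility} for the computational fairness notions and note that identical unit surplus rules out statistical discrimination directly. One terminological slip: a node with out-degree zero is a sink, not a root, but your substantive condition (``no edge leaves $A$'') is the right one and matches the paper.
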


If we want to prevent blatantly unfair equilibrium, we cannot use any of these definitions of fairness.  Yet if the only reason to introduce this conception of unfairness in games is to point out that guaranteeing EOP is unhelpful when no candidate receives sufficient resources, then existing remedies would be sufficient.  In particular, we could take a welfare-based approach, where we try to ensure that the average (or minimum, or some function) of the utilities the players receive is as large as possible.  This kind of utilitarianism has been expounded on in the context of algorithmic fairness by Heidari et al.\ \cite{HeidariFGK18}.  If we for example insist on maximizing on welfare, and define welfare as the minimum share of the surplus, in our example, this would force an equal split of the surplus, and this outcome would not be blatantly unfair.  Indeed, blatant unfairness as defined is certainly compatible with a welfare-based approach:  we are insisting that welfare be past a certain threshold.

Nonetheless, we have a substantially different motivation than welfare-based approaches do.   Since approaches based on welfare are typically grounded in EOP by defining welfare as some function of the utilities of individuals (as in Heidari et al.\ \cite{HeidariFGK18}), these approaches, just like group, individual, and causal fairness definitions, fall victim to Anderson's criticisms of EOP, including neglecting the social equality of workers and enacting a paternalistic oversight by those using such approaches.

Where we should see welfare-based approaches and relational approaches truly diverge, though, is in larger, more complex systems than complete-information games.  In complete-information games, the firm and candidate have already implicitly agreed that both of their goals are to maximize the utilities and they agree empirically on how each possible set of actions will produce different utilities.  Each already recognizes the other's needs, as represented by these utilities, even if they are not a priori willing to satisfy those needs.  
Because each agrees to the other's utilities, the only way to justify their actions is according to these utilities:  hence an equilibrium is relationally unfair when they are not incentivized to meet each other's needs.  Andersonian relational equality, though, is a political theory best suited to more complex domains without complete information, where players do not necessarily agree on each other's payoffs.  Players can communicate with each other and have uncertainty about each other and the future.  Each player may have various sorts of uncertainty about their own goals -- they may have multiple competing goals or may not even know what their own goals should be.  This is the world in which relational equality is intended for, and we pose as an open question. 

\section{Beyond two-player games}\label{sec:multi-player}

Implicitly, group fairness appears designed for situations with limited resources, at least when the labels represent decisions about resources.  If there were no limits to resources, we could employ a classifier that gave out resources to everybody and there would be no need to equalize the distribution of the resources across groups.  The firm in Section \ref{sec:blatant_unfairness}, however, is not resource-constrained because they produce unit surplus with every candidate.  Group fairness appears entirely unnecessary anyway in this case.  So it is worth investigating what happens in the resource-constrained case, where firms can only produce surplus with a limited number of candidates.  This kind of constraint is best modeled in a game with many players, which motivates the need to expand the definition of blatant unfairness beyond two-player games.

We extend blatant unfairness by considering our previous example.  We could have considered the set of games $\{M_{x,f} : x\in X\}$, previously examined in Section \ref{sec:blatant_unfairness} in isolation, as a single game.  Whereas before, we posited no particular relationship between the games, here we can consider, for example, what happens if the games happen simultaneously:  The firm plays a strategy $g:X\rightarrow[0,3]$, and then every candidate $x$ simultaneously decides whether or not to accept the offer $g(x)$.  If they accept, as before, they receive $g(x)-1$ utility, and the firm receives $2-g(x)$ utility, i.e.\ the firm splits unit surplus with every candidate.  If they reject, as before, they receive an exogenous but unknown outside option.  Call this game $M_X$.

Since each candidate plays an independent subgame, the equilibrium strategies we found in Proposition \ref{prop:sce} exist here in each subgame.  Thus, if we believe that any such individual subgame with such an equilibrium is blatantly unfair, as we have already asserted, then we should believe the entire game is blatantly unfair regardless of the outcomes of all the other subgames.  This motivates the following definition, which says that all players should be like the firm in this example, which faces a constant, positive-sum subgame with the candidate for whom the equilibrium is blatantly unfair, or else be like the other candidates, who are unaffected:

\begin{defn}\label{def:multi_blatant_unfairness}
An equilibrium $\pi$ is \emph{blatantly unfair} with respect to $x$ in a complete-information game if the payoff for $x$ under $\pi$ is not positive, but there is an equilibrium $\pi'$ where $x$ receives a positive payoff, and for any other player, they either receive a payoff at least as high under $\pi'$ as under $\pi$, or receive a positive payoff under $\pi'$.
\end{defn}

Alternatively, we could define a welfare function as the following:  one set of payoffs is as least as preferred as another if every player with negative utility receives at least as much utility, and every player with positive utility still receives positive utility.

$M_X$, introduced above, witnesses the existence of blatantly unfair equilibria in multi-player games where the firm plays a group fair classifier (or an individually fair classifier, etc.), though we leave this to the reader to verify.  Proposition \ref{prop:incompatibility} cannot carry over to this setting, however:  there might not be any equilibria that are group-fair, let alone the particular equilibria that happens to be blatantly unfair.

This definition is not the only possibility for extending Definition \ref{def:2p_blatant_unfairness}.  We could consider variants, such as requiring that payoffs for all players be positive in the alternative equilibrium, for example.  We do not focus on such variations because we believe that distinctions should be made by examining non-complete-information games and other more realistic settings, rather than attempting to catalog all possible variations as has been done for group fairness.  We leave this for future work.

In any case, consider modifying $M_X$ by introducing a constraint on resources via a limit on the number of jobs the firm can offer.  This may be forced due to diminishing marginal returns to hiring additional people, or due to the capital overhead hiring requires.  For example, suppose the firm only offers an equal split of the surplus, or none at all, i.e. $g(x)\in\{0,3/2\}$, and can only offer some $i<|X|$ people the job, i.e. only $i$ people may receive a non-zero offer.

If the firm is a monopoly, so all outside options are zero, then no $g(x)$ that offers $i$ people the job is part of a blatantly unfair equilibrium:  anyone currently not being offered a job can only be offered a job at the expense of someone else's offer, who has no outside option because the firm is a monopoly.  This makes resource-constrained monopolies the ``natural'' setting for group fairness (and individual fairness, etc.).  But even in this case, this may not be a strong argument for the use of group fairness in such cases, but rather an argument for a normative requirement to modify the setting via policy-making when it appears.

As soon as we jettison the assumption that the firm is a monopoly, either by adding non-zero exogenous outside options, or adding more firms so that the market can clear, it's again possible for blatantly unfair equilibria to exist.  For example, in the simplest case, if there's some $|X|$ firms, even if each firm can only hire one candidate, where outside options are as before (representing either exogenous or endogenous outside options), then this is just equivalent to our original model in Section \ref{sec:blatant_unfairness} and while blatant unfairness can exist in this market, that also means there are equilibria where everyone receives surplus, unlike the monopoly case.  While the constraints firms face may appear at first blush to motivate previous definitions of group fairness, those definitions still fail to capture blatant unfairness in markets. 

\section{Limitations and Future Work}\label{sec:discussion}

In this work, we have shown how the existence of self-confirming equilibria in hiring markets create blatantly unfair situations in which candidates are trapped in an undesirable relationship with a firm.  This scenario enabled us to define a formal model of blatant unfairness as equilibria of this kind, and show how existing technical definitions of fairness fail to escape these equilibria. However, this work is not without limitations, a few of which we discuss now.

While we have set out to develop a formal model of relational equality, we have not completed this goal.  Indeed, as we discuss in Section \ref{sec:fair_failures}, our approach is largely consistent with welfare-based approaches, though its motivation is distinctly different.  The difference is, unlike in traditional social choice theory, we focus solely on sets of payoffs that can happen at equilibrium.  We do not expect such a close relationship to social choice theory outside of games where players agree on each other's utilities, but we leave this problem for future work.

Critically for our purposes, Anderson has noted the necessary interrelationship between distributional and relational forms of equality: for Anderson, distributional or material equality is a potentially necessary but insufficient condition for relational equality to occur.  As such, Andersonian relational equality is concerned with distributional equality to the extent it ``requires that everyone have effective access to enough resources to avoid being oppressed by others and to function as an equal in civil society'' \cite{Anderson:1999wh} (320). Such an analysis of the relationship between social power and material inequality would require looking at dynamics, which we leave for future work. Anderson notes that some degree of material equality must be necessary in order to ensure that elites, whether of wealth, status, or other manifestations of social power, do not gain outsized influence in a particular society, and we demonstrate a very simple version of this principle:  relationally unequal situations in games with public knowledge of payoffs can be avoided by ensuring payoffs are positive at equilibrium.  More sophisticated versions of this principle will require examining systems with a wider set of actions, including being able to exert control over the rules of the game.  Blatant unfairness is defined with respect to the actions players can take, and the most impactful actions are those that change the game entirely.

\bibliographystyle{plain}
\bibliography{refs}

\appendix

\end{document}